\newtheorem{theorem}{Theorem}[section]
\newtheorem{lemma}[theorem]{Lemma}
\newtheorem{example}[theorem]{Example}
\newtheorem{definition}[theorem]{Definition}
\newtheorem{remark}[theorem]{Remark}
\newcommand{\diag} {\mbox{\rm diag}\,}
\newcommand{\crr}{\color{black}}
\newcommand{\crb}{\color{black}}
\newcommand{\mvec} {\mbox{\rm vec}\,}
\journal{}
\begin{document}

\begin{frontmatter}



\title{On the Reachability   of Networked Systems}

\author[label1]{Mohsen Zamani}
\author[label2]{Brett Ninness}
\author[label3]{Daniel Quevedo}
\address[label1]{School of Electrical Engineering and Computer Science, The University of Newcastle, Callaghan, NSW 2308, Australia.\\ (e-mail: mohsen.zamani@newcastle.edu.au).}
\address[label2] { School of Electrical Engineering and Computer Science, The University of Newcastle, Callaghan, NSW 2308, Australia.\\ (e-mail: brett.ninness@newcastle.edu.au). }
\address[label3]{ Department of Electrical Engineering (EIM-E), University of Paderborn,	33098 Paderborn, Germany.\\ (e-mail: dquevedo@ieee.org).}


\address{}

\begin{abstract}
In this paper, we   study networks of discrete-time linear time-invariant subsystems. Our focus is on situations where  subsystems are connected to each other through a time-invariant topology {\crb 
 and where there exists a base-station whose aim is to control the subsystems into any desired destinations. However, the base-station can only communicate with some of the subsystems that we refer to as \textit{leaders}. There  are no direct  links between the base-station and the rest of subsystems, known as \textit{followers}, as they  are only able to liaise among themselves and with some of the leaders.   

 The current paper formulates this  framework  as  the well-known reachability problem for linear systems. Then to address this problem,} we introduce notions of \textit{leader-reachability} and \textit{base-reachability}. We present  algebraic conditions under which these notions hold. It turns out that if subsystems are represented by   minimal state space representations, then base-reachability  always holds. Hence, we focus on leader-reachability and investigate the corresponding conditions in  detail. We further demonstrate that when the networked system parameters i.e. subsystems' parameters and interconnection matrices,  assume generic values then the whole network is both leader-reachable and base-reachable.

\end{abstract}

\begin{keyword}
Networked Systems, reachability.
\end{keyword}

\end{frontmatter}


\section{Introduction}\label{sec:intro}

Recent developments of enabling technologies such as communication systems, cheap computation equipment and sensor platforms have given great impetus to the creation of networked systems.  Due to their large application in different branches of science and technology, these systems have attracted significant attention worldwide and researchers have studied networked systems from different perspectives (see e.g.  \cite{sinopoli2003distributed},
\cite{olfati2002distributed}, \cite{tanner2003stable}, \cite{Olfati2007}, \cite{dankers2014system}, \cite{hespanha2007survey}, \cite{zamaniautomatic2014}).

In this paper, we consider   networks consisting of finite-dimensional linear time-invariant subsystems. We suppose that each subsystem in the network has   discrete-time  dynamics and  the interconnection topology among subsystems is time-invariant. In the framework under study, there exists a base-station that can only send command signals to some of the subsystems with superior capabilities, known as \textit{leaders}. The remainder of the subsystems referred to as \textit{followers} can only accept input signals from some of the leaders and followers. 

Here, we address a fundamental issue associated with the above framework namely the reachability.  The concept of reachability is well-understood in the systems and control literature \cite{kailath}. We adopt this concept to address the following question.

\textit{
Under which conditions can the state of followers  reach any desired values using the commands generated from the base-station?}

We tackle this question by providing a   mathematical model for the networked system under study. We introduce the notions of \textit{base-reachability} and \textit{leader-reachability}.  Then we show that systems networked according to the model considered here  are generically both base-reachable and leader-reachable. This means that when the parameters of the network i.e. parameter matrices of each subsystem as well as the interconnection topology, assume generic values, these properties hold. We also investigate some topologies that give rise to  state matrices with \textit {symmetric} or \textit {circulant} structures. 

The problem studied in this paper has some connections with  the existing literature concerned with controllability of multi-agent systems. There exists a body of works in this area and among many, interested readers can refer to \cite{tanner2004}, \cite{ji2008controllability}, \cite{ji2007graph}, \cite{rahmani2009controllability}, \cite{liu2011controllability}, \cite{zamani2009structural}, \cite{ji2008controllability}, \cite{martini2010controllability} and references listed therein. These references studied the controllability problem for a group of single integrators connected through the nearest neighbourhood law.  We comment on some of the works along this line in the next paragraph. 

 The controllability problem of multi-agent systems was proposed in  \cite{tanner2004} and  the author formulated this problem  as the controllability problem of   linear systems, whose state matrices are induced from the graph \textit{Laplacian} matrix. Necessary and sufficient algebraic conditions on the state matrices were given based on linear system tools. Under the same setup, a sufficient condition was derived in \cite{Ji2006} where it was  shown that the system is controllable if the null space of the leader set is a subset of the null space of the follower set. In \cite{ji2007graph}, it was shown that a necessary and sufficient condition for controllability is not sharing any common eigenvalues between the Laplacian matrix of the follower set and the Laplacian matrix of the whole topology. However, it remains elusive on what exactly the graphical meaning of these rank conditions related to the Laplacian matrix is. This motivates several research activities on illuminating the controllability of multi-agent systems from a graph theoretic point of view. For example, a notion of anchored systems was introduced in \cite{rahmani2006}, and it was shown that symmetry with respect to the anchored vertices makes the system uncontrollable. In \cite{ji2008graph}, the authors characterized some necessary conditions for the controllability problem based on a new concept called leader-follower connectedness. While \cite{ji2008graph} was focused on the case of fixed topology, the corresponding controllability problem under switching topologies was investigated in \cite{ji2008controllability}, which employed some recent achievements in the switched systems literature. Later,  the authors of \cite{zamani2009structural} assumed the graph to be weighted with freely chosen entries. Under this setup, they proposed the notion of  structural controllability for multi-agent systems. It turned out that this controllability notion, solely depends on the topology of the communication scheme; the multi-agent system is controllable if and only if the graph is connected. This result is later extended in \cite{partovi} to the case where  the dynamics of each subsystem are expressed by high order integrators rather than a single integrator.  The authors of \cite{Ji2009} examined the connection between the controllability of networks comprising   single integrator subsystems and those consisting of subsystems with high order integrators.  

The current paper has several contributions. Firstly, in contrast to the   works described above, we relax the limitation imposed on subsystems dynamics by allowing subsystems to be  general  discrete-time  linear time-invariant (DLTI) state space systems. Secondly, in most of the literature the followers are connected to one another by the nearest neighbourhood law. We relax  this constraint here as well. Thirdly, as opposed  to existing literature, we  explicity examine the role of the base-station and its connections to the leaders.

The structure of this paper is as follows. In the next section, we formulate the problem under study.  The main results of the paper are introduced in Section \ref{sec:main}. Finally, Section \ref{sec:con} provides the concluding remarks and comments about future research directions. 

\section{Problem Formulation}

\begin{figure}[!t]
	\begin{center}
		\includegraphics[width=5cm,height=6cm]{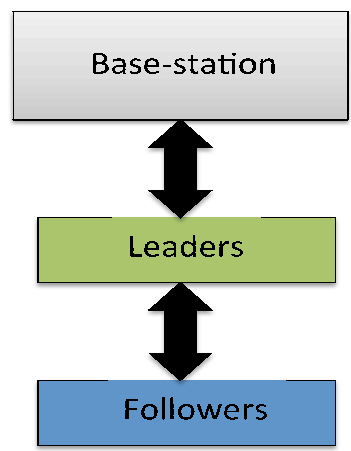}
		\caption{\emph{The connection structure between the base-station, leaders and follower}} \label{fig:structure}
	\end{center}
\end{figure}

We assume that there exist $N$ linear subsystems which are connected together through  linear coupling rules.  Suppose that there exist $N_l$ subsystems with higher levels of computing and communicating powers that we refer to as \textit{leaders}. The rest of the subsystems are called \text{followers} denoted by $N_f$. It is natural to assume that the number of leaders is strictly less than the number of followers i.e. {\crr $N_f>N_l$.} {\crb The framework studied in this paper is  depicted in Fig. \ref{fig:structure}.} 

Without loss of generality, we assume that the first $N_f$ subsystems are followers and the remaining  $N-N_f$ subsystems act as leaders.

Suppose that the linear state space  dynamics of the followers are expressed by a set of difference equations as 
\begin{equation}
\begin{split}
\label{sys1}
x^i_{t+1} &=A_ix^i_{t}+B_iv^i_{t},\\  
w^i_{t}   &=C_ix^i_{t},\; i=1,\dots,N_f.
\end{split}
\end{equation}
where $x^i_t \in \mathbb R^{n_i}$, $v^i_t \in \mathbb R ^{m_i}$, $w^i_t \in \mathbb R^{p_i}$. We suppose that all $N$ subsystems    are reachable and observable. The control command $v^i_t$ is constructed based on the following law

\begin{equation} \label{eq:followu}
v^i_{t}=\sum_{j=1}^NL_{ij}w^j_{t}. 
\end{equation}
{\crr
\begin{remark}	
Note that the control law (\ref{eq:followu}) allows consideration of both centralized and  distributed control schemes. If the control law (\ref{eq:followu}) is implemented locally, then  the control gains  $L_{ij}$ corresponding to those subsystems which are not neighbors of $i$-th subsystem are assumed to be zero. This ensures that the summation $\sum_{j=1}^N {L_{ij} {w^j_t}}$ simplifies into a summation over the neighbor set of $i$-th subsystem. Hence, the  control law (\ref{eq:followu}) represents the topology of the network  i.e. the matrices $L_{ij}$ represent which components of the state vector associated with the $j$-th subsystem are available to the local controller corresponding to the $i$-th subsystem. Thus, one can readily verify that the consensus law \cite{fax2004} can be regarded as a  special case of the control strategy \eqref{eq:followu}. 
\end{remark}
}

{\crr
Let us also define the linear  dynamics of each leader as 
 
	\begin{equation}
	\begin{split}
	\label{sysleader}
	x^i_{t+1} &=A_ix^i_{t}+B_iu_{t},\\
	w^i_{t}   &=C_ix^i_{t},\:\:\: N_{f+1},\ldots, N,
	\end{split}
	\end{equation}
	where $u_t \in \mathbb R^{m}$ is the control command generated from the base-station.

For our subsequent analysis it is convenient to define 
\begin{equation}\label{eq:nodematricesfollower}
\begin{split}
\overline  A_f  &:=\diag
(A_1, \ldots, A_{N_f}),\\
\overline B_f  &:=\diag
(B_1, \ldots, B_{N_f}),\\
C_f  &:=\diag
(C_1, \ldots, C_{N_f}),\\
L&:=\begin{pmatrix}
L_{11} & \ldots&{L_{1N}}\\
\vdots&\ddots&  \vdots\\
L_{N_f1}&  \ldots&L_{N_fN}\end{pmatrix} \in \mathbb{R}^{{m_f}\times\overline{p}},\\
x^f_t&:=
\begin{pmatrix}
x^1_t\\\vdots\\x^{N_f}_t
\end{pmatrix}\in \mathbb{R}^{n_f}, \\
v_t&:=
\begin{pmatrix}
v^1_t\\\vdots\\v^{N_f}_t
\end{pmatrix}\in \mathbb{R}^{m_f},\\
w^f_t&:=
\begin{pmatrix}
w^1_t\\\vdots\\w^{N_f}_t
\end{pmatrix}\in \mathbb{R}^{p_f}.\\
\end{split}
\end{equation}
}
where $ m_f =\sum_{i=1}^{N_f}m_i$,
$ p_f =\sum_{i=1}^{N_f}p_i$, $n_f=\sum_{i=1}^{N_f}n_i$, $\overline p=\sum_{i=1}^{N}p_i$.

We split the gain matrix $L$ as
\[L=\left(L_{ff}\:L_{lf}\right),\]
where $L_{ff}$ captures the first $p_f$ columns of $L$. This matrix captures   the interconnection existing among followers only. Furthermore,  $L_{lf}$ contains those columns of $L$ that are not contained in $L_{ff}$ and thereby exhibits the relation between followers and leaders.
 
	In terms of the above quantities,  the aggregated closed-loop system associated with the followers can be succinctly described  via

	\begin{equation}
	\begin{split}
	\label{eq:followerclose}
	x^f_{t+1} &=\underbrace{\left(\overline A_f+\overline B_fL_{ff}\mathcal \overline C_f\right)}_{ A_f} x^f_{t}+\underbrace{\overline B_fL_{lf}C_l}_{B_f}x^l_{t},\\  
	w^f_{t}   &=  C_fx^f_{t}.
	\end{split}
	\end{equation}

We also record the aggregated dynamics for the leaders as 
 
	\begin{equation}
	\begin{split}
	\label{eq:leaderagg}
	x^l_{t+1} &=A_l x^l_{t}+B_l u_{t},\\  
	w^l_t&= C_lx^l_t, 
	\end{split}
	\end{equation}
where
	\begin{equation}\label{eq:nodematricesfollower}
	\begin{split}
	A_l &:=\diag
	(A_{N_f+1}, \ldots, A_{N}),\\
	B_l  &:=\diag
	(B_{N_f+1}, \ldots, B_{N}),\\
	C_l  &:=\diag
	(C_{N_f+1}, \ldots, C_{N}),\\
	x^l_t&:=
	\begin{pmatrix}
	x^{N_f+1}_t\\\vdots\\x^{N}_t
	\end{pmatrix}\in \mathbb{R}^{n_l},\\
	w^l_t&:=
	\begin{pmatrix}
	w^{N_f+1}_t\\\vdots\\w^{N}_t
	\end{pmatrix}\in \mathbb{R}^{p_l},
		\end{split}
	\end{equation}
	with dimensions  $ n_l =\sum_{i=N_f+1}^{N}n_i$ and  $p_l=\sum_{i=N_f+1}^{N}p_i$.
 
 \begin{figure}[!t]
 	\begin{center}
 		\includegraphics[width=8cm,height=6.5cm]{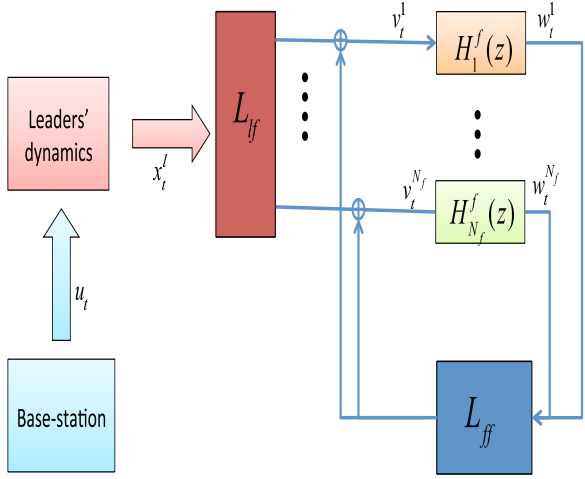}
 		\caption{\emph{ The dynamics of each follower are represented by $H^f_i(z)$, $i=1,\ldots,N_f$.}} \label{fig:structure2}
 	\end{center}
 \end{figure}
 
In this paper, our objective is to address the following question

\textbf{Under which  {\crr conditions can   states of    followers  be steered into}  any desired values from any intial conditions, using the command signal  $u_t$ and  control law \eqref{eq:followu}.}

{\crb
To this end, we first introduce  Fig. \ref{fig:structure2} that provides a detailed pictorial description of the framework under study. It is clear that indeed there exist two levels of control in this framework i.e. from the base-station to leaders and from the leaders to followers. 
}

\section{Reachability of Networked Systems} \label{sec:main}

We start this section by formally introducing  definitions of reachability  for each levels of control in  Fig. \ref{fig:structure2}. These definitions   are adapted from the literature  \cite{hespanha2009} for the purpose of the current paper.  

\begin{definition}\label{def:basereach}
	The follower dynamics \eqref{eq:followerclose} is said to be \textbf{leader-reachable} if and only if for any intial state $x^f_{t_0} \in \mathbb R ^{n_f}$ and an arbirary final state $\bar x^f_{t_f} \in \mathbb R ^{n_f}$, there exists $x^l_t$, $t \in [t_0,t_f]$ such that $x^f_{t_f}=\bar x^f_{t_f}$. 
\end{definition}

Similarly for the  dynamics \eqref{eq:leaderagg}, we state the following definition.

\begin{definition} \label{def:reachlead}
	The leader dynamics \eqref{eq:leaderagg} is said to be \textbf{base-reachable} if and only if for any intial state $x^l_{t_0} \in \mathbb R ^{n_l}$ and an arbirary final state $\bar x^l_{t_f} \in \mathbb R ^{n_l} $, there exists $u_t$, $t \in [t_0,t_f]$ such that $x^l_{t_f}=\bar x^l_{t_f}$. 
\end{definition}

The following lemma follows standard  systems and control literature see e.g.  \cite{kailath}. 

\begin{lemma}\label{lem:reachmain}
	The system   \eqref{eq:followerclose}/\eqref{eq:leaderagg}  is  leader-reachable/base-reachable if and only if the   matrix $\mathcal R_l=  \left(B_f,\;  A_f   B_f,\;\dots,\;  A^{n_f-1}_f  B_f\right) $/ $\mathcal R_b= \left( B_l,\;A_lB_l,\;\dots,\;A^{n_l-1}_lB_l \right)$  has full-row rank. 
\end{lemma}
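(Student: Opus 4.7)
The plan is to invoke the standard discrete-time reachability argument for linear systems, since once the leader trajectory $x^l_t$ (respectively $u_t$) is viewed as the driving input, both \eqref{eq:followerclose} and \eqref{eq:leaderagg} are ordinary DLTI systems and Definitions \ref{def:basereach}--\ref{def:reachlead} coincide with the classical notion of reachability. I will present the argument for \eqref{eq:followerclose}; the one for \eqref{eq:leaderagg} is identical after replacing $(A_f,B_f,x^l_t)$ by $(A_l,B_l,u_t)$.

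First, I would iterate \eqref{eq:followerclose} from $t_0$ to $t_f$ to obtain the solution
\begin{equation*}
x^f_{t_f} = A_f^{t_f-t_0}x^f_{t_0} + \sum_{k=t_0}^{t_f-1} A_f^{t_f-1-k} B_f\, x^l_k.
\end{equation*}
Given $x^f_{t_0}$ and a desired $\bar x^f_{t_f}$, finding an admissible leader trajectory is equivalent to finding vectors $x^l_{t_0},\ldots,x^l_{t_f-1}$ with
\begin{equation*}
\bar x^f_{t_f} - A_f^{t_f-t_0}x^f_{t_0} \;=\; \bigl[A_f^{t_f-1-t_0}B_f,\ldots,A_fB_f,B_f\bigr]
\begin{pmatrix} x^l_{t_0}\\ \vdots \\ x^l_{t_f-1}\end{pmatrix}.
\end{equation*}
Since $x^f_{t_0}$ and $\bar x^f_{t_f}$ are arbitrary, the right-hand side must be able to take any value in $\mathbb{R}^{n_f}$, which is equivalent to the block matrix on the right having full row rank.

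Next I would use the Cayley--Hamilton theorem: every power $A_f^j$ with $j\ge n_f$ is a linear combination of $I,A_f,\ldots,A_f^{n_f-1}$, so the column space of $\bigl[A_f^{t_f-1-t_0}B_f,\ldots,B_f\bigr]$ is contained in, and for $t_f-t_0\ge n_f$ coincides with, the column space of $\mathcal R_l=[B_f,A_fB_f,\ldots,A_f^{n_f-1}B_f]$. Consequently, full row rank of $\mathcal R_l$ is both sufficient (take $t_f-t_0\ge n_f$) and necessary (if $\mathcal R_l$ is rank-deficient, the reachable set lies in a proper subspace of $\mathbb{R}^{n_f}$ for \emph{every} horizon) for leader-reachability.

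There is essentially no technical obstacle: the only point worth a line of justification is that the condition is independent of the particular horizon $t_f-t_0$, which is exactly where Cayley--Hamilton is used. Repeating the same three steps with $(A_l,B_l)$ in place of $(A_f,B_f)$ yields the base-reachability statement, completing the proof.
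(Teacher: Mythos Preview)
Your argument is correct and is precisely the classical discrete-time reachability proof (variation-of-constants formula plus Cayley--Hamilton) that the paper is implicitly invoking. The paper itself does not supply a proof for this lemma but simply defers to standard references such as \cite{kailath}, so your write-up is exactly what those references contain.
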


The above definitions enable  us to introduce the following lemma. 

{\crr{\crb
\begin{lemma} \label{lem:folowerbasereach}
Suppose the system \eqref{eq:followerclose}  
is leader-reachable and  the system \eqref{eq:leaderagg} is  base-reachable. Then there exists   $u_t$, $t \in [t_0,t_f]$, such that 
for any arbirary final state $\bar x^f_{t_f}$, $x^f_{t_f}=\bar x^f_{t_f}$. 
\end{lemma}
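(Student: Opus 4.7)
The plan is to view the cascade of \eqref{eq:followerclose} driven by \eqref{eq:leaderagg} as a single linear time-invariant system with input $u_t$, and to apply Lemma \ref{lem:reachmain} to that system, but restricted to the follower coordinates. Concretely, I form the augmented dynamics
\begin{equation*}
\begin{pmatrix} x^f_{t+1} \\ x^l_{t+1} \end{pmatrix} = \underbrace{\begin{pmatrix} A_f & B_f \\ 0 & A_l \end{pmatrix}}_{\mathcal{A}} \begin{pmatrix} x^f_t \\ x^l_t \end{pmatrix} + \underbrace{\begin{pmatrix} 0 \\ B_l \end{pmatrix}}_{\mathcal{B}} u_t,
\end{equation*}
and express $x^f_{t_f}$ as an affine-in-$u$ quantity of the form $x^f_{t_f} = (\text{drift from } x^f_{t_0}, x^l_{t_0}) + \Pi_f\bigl[\mathcal{B}\;\mathcal{A}\mathcal{B}\;\cdots\;\mathcal{A}^{t_f-t_0-1}\mathcal{B}\bigr]\mathbf{u}$, where $\Pi_f$ projects onto the first $n_f$ coordinates. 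The lemma then reduces to showing that $\Pi_f\bigl[\mathcal{B}\;\mathcal{A}\mathcal{B}\;\mathcal{A}^2\mathcal{B}\;\cdots\bigr]$ has image equal to $\mathbb{R}^{n_f}$.

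A direct induction on $k$ gives, for $k\ge 1$,
\begin{equation*}
\mathcal{A}^k \mathcal{B} = \begin{pmatrix} \sum_{j=0}^{k-1} A_f^{k-1-j} B_f A_l^{j} B_l \\ A_l^{k} B_l \end{pmatrix},
\end{equation*}
so the target image is the subspace of $\mathbb{R}^{n_f}$ spanned by the columns of $A_f^{a} B_f A_l^{b} B_l$ as $a,b$ range over nonnegative integers. I chain the two hypotheses: Lemma \ref{lem:reachmain} applied to \eqref{eq:leaderagg} yields $\sum_{b\ge 0} A_l^{b}\,\mathrm{Im}\,B_l = \mathbb{R}^{n_l}$, and applied to \eqref{eq:followerclose} yields $\sum_{a\ge 0} A_f^{a}\,\mathrm{Im}\,B_f = \mathbb{R}^{n_f}$. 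Any $y \in \mathbb{R}^{n_f}$ can therefore be written as $y = \sum_a A_f^{a} B_f z_a$ with $z_a \in \mathbb{R}^{n_l}$, and each $z_a$ as $z_a = \sum_b A_l^{b} B_l w_{a,b}$; substituting back presents $y$ as a linear combination of columns of $A_f^{a} B_f A_l^{b} B_l$, which establishes the required surjectivity.

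With surjectivity in hand, pick $t_f - t_0 \ge n_f + n_l$, so that Cayley--Hamilton guarantees the partial reachability matrix already spans the same image as the full one. For any data $x^f_{t_0}, x^l_{t_0}$ and any target $\bar x^f_{t_f}$, there then exist inputs $u_{t_0},\ldots,u_{t_f-1}$ that cancel the drift and hit $\bar x^f_{t_f}$ in the follower block; the intermediate leader trajectory $x^l_t$ generated by these inputs is exactly what plays the role of the ``virtual input'' in Definition \ref{def:basereach}, so the conclusion of the lemma follows. I do not foresee a genuine obstacle: the only subtlety is the two-stage expansion of an arbitrary $y \in \mathbb{R}^{n_f}$ through the cascade, which is precisely where both reachability hypotheses are consumed.
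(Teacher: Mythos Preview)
Your reduction to the cascade $(\mathcal{A},\mathcal{B})$ and the formula $\Pi_f\mathcal{A}^k\mathcal{B}=\sum_{j=0}^{k-1}A_f^{\,k-1-j}B_fA_l^{\,j}B_l$ are correct, but the sentence ``so the target image is the subspace of $\mathbb{R}^{n_f}$ spanned by the columns of $A_f^{a}B_fA_l^{b}B_l$'' is where the argument breaks. That formula only gives the inclusion
\[
V:=\sum_{k\ge 0}\mathrm{Im}\bigl(\Pi_f\mathcal{A}^k\mathcal{B}\bigr)\ \subseteq\ W:=\mathrm{span}\{A_f^{a}B_fA_l^{b}B_l:a,b\ge 0\};
\]
the individual products $A_f^{a}B_fA_l^{b}B_l$ never occur as blocks of the reachability matrix, only their anti-diagonal sums $\sum_{a+b=k-1}A_f^{a}B_fA_l^{b}B_l$ do. Your two-stage expansion then correctly shows $W=\mathbb{R}^{n_f}$, but that says nothing about $V$, which is what you actually need to be all of $\mathbb{R}^{n_f}$.

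This gap cannot be patched, because $V\subsetneq W$ can occur even when both hypotheses of the lemma hold. Take
\[
A_f=A_l=\begin{pmatrix}0&1\\0&0\end{pmatrix},\qquad B_f=\begin{pmatrix}1&0\\0&-1\end{pmatrix},\qquad B_l=\begin{pmatrix}0\\1\end{pmatrix}.
\]
Here $(A_f,B_f)$ is reachable (since $B_f$ is invertible) and $(A_l,B_l)$ is reachable, so \eqref{eq:followerclose} is leader-reachable and \eqref{eq:leaderagg} is base-reachable. Yet $\Pi_f\mathcal{A}\mathcal{B}=B_fB_l=(0,-1)^\top$, while $\Pi_f\mathcal{A}^2\mathcal{B}=A_fB_fB_l+B_fA_lB_l=(-1,0)^\top+(1,0)^\top=0$ and $\Pi_f\mathcal{A}^k\mathcal{B}=0$ for all $k\ge 2$ (both $A_f$ and $A_l$ are nilpotent of index $2$). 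Hence $V=\mathrm{span}\{(0,1)^\top\}\subsetneq\mathbb{R}^2=W$, and indeed a direct simulation from zero initial conditions gives $(x^f_t)_1=0$ for every $t\ge 2$ and every input sequence $u$. So the surjectivity you are after fails in this example; the two reachability hypotheses alone do not force the conclusion. The paper's own argument is a one-line appeal to the fact that both reachable subspaces are the full space, and it does not address this obstruction either.
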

}
\begin{proof}
Under the conditions in the lemma statement the reachable subspaces for the system  \eqref{eq:followerclose} and \eqref{eq:leaderagg} are  equal to $\mathbb R^{n_f}$  and $\mathbb R^{n_l}$ respectively. Thus, one can always construct a  proper input signal $u_t$  \cite{hespanha2009} to steer the  states of the system \eqref{eq:followerclose}  into any desired value. 
\end{proof}

%
%

}
{\crr
The result of Lemma \ref{lem:folowerbasereach}  is illustrated further in the following example. 

	\begin{figure}[!t]
		\begin{center}
			\includegraphics[width=4cm,height=4cm]{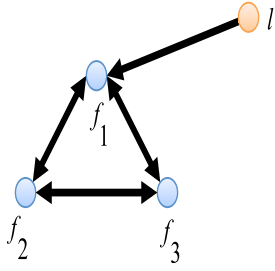}
			\caption{\emph{Followers are colored in blue and denoted by $f_i$ and the single leader is yellow indicated by $l$. }} \label{fig:3agents}
		\end{center}
	\end{figure}

\begin{example}

Consider a setup as  shown in Fig. \ref{fig:3agents}. { \crb For the sake of  illustration, we suppose that   all subsystems including followers and the leader  have very simple  dynamics  described as follows}	
\begin{equation}\label{eq:ex1}
\bar x^i_{t+1}=0.2 \bar x^i_t + \bar v^i_t, \:\:i=1,\ldots,4,
\end{equation}
with $\bar v^i_{t}=\sum_{j=1}^4 \bar L_{ij}\bar x^j_{t}$, $i=1,2,3$ and $\bar v^4_{t}=u_t$.

Given the dynamics (\ref{eq:ex1}), all subsystems are reachable. We let the parameters $\bar  L_{21}= \bar L_{31}=1$, $\bar L_{21}=\bar L_{23}=2$ and $\bar L_{31}=\bar L_{32}=3$. Then it is easy to verify that the follower dynamics are 

\begin{equation}
\begin{pmatrix}
\bar  x^1_{t+1}\\
\bar  x^2_{t+1}\\
\bar   x^3_{t+1}
\end{pmatrix}=
\underbrace{\begin{pmatrix}
0.2 & 1 & 1\\
2 & 0.2& 2\\
3 & 3 & 0.2
\end{pmatrix}}_{A_f}
\begin{pmatrix}
\bar x^1_{t}\\
\bar x^2_{t}\\
\bar x^3_{t}
\end{pmatrix}+
\underbrace{\begin{pmatrix}
1\\
0\\
0
\end{pmatrix}
}_{B_f}\bar x^4_{t+1}. \label{eq:exaggre}
\end{equation}

It can be checked that this system is base-reachable. Given the dynamics (\ref{eq:ex1}), one can conclude that  the states of  the system \eqref{eq:exaggre} can be driven into any desired point in the space using the input command $u_t$. 	
\end{example}

\subsection{Leader Reachability}

}

%
%

{\crb
In the previous subsection, we introduced the notions of leader-reachability and base-reachability.  It is   worthwhile to investigate these notions when  networked systems attain special interconnection structures. This is because in different applications,  subsystems may be linked to each other in particular forms see e.g. \cite{olfati2002distributed}, \cite{nabi2013controllability}, \cite{brockett-willems}. Thus, in this subsection, we aim to explore networked systems with special structures. 	
}

One should note that   when the pairs $A_i,B_i$ are reachable,  the base-reachability of the system \eqref{eq:leaderagg} becomes immediate. {\crr However, it still remains a nontrivial task to explore the concept of leader-reachability for the system \eqref{eq:followerclose}. In this subsection we study this notion in more detail.

The analysis of leader-reachability for the system \eqref{eq:followerclose}  is very intricate in general. This is because the state matrix $A_f$ has an involved structure. Furthermore, networks with special coupling structures appear in many
applications, such as cyclic pursuit \cite{MarshallBrouckeFrancis};
shortening flows in image processing \cite{Bruckstein} or the
discretization of partial differential
equations~\cite{brockett-willems}. Thus,  in order to provide some rigorous results we study the notion of leader-reachability  when the state matrix attains  some particular structures.     Here,  we consider two scenarios namely  symmetric $A_f$   and circulant  $A_f$.     
 	
\subsubsection{Symmetric $A_f$}	
  Several interconnection topologies can lead to a symmetric $A_f$ matrix. For instance, consider a scenario where a set of  scalar subsystems are connected to each other based on the consensus law \cite{olfati2002distributed}.
}

\begin{theorem} \label{them:symmetricA}
	Suppose that the matrix $A_f$ is symmetric. Let $\lambda_i$ and $\nu_i$, $\forall i \in \{1,2,\ldots,n_f\}$,  denote eigenvalues and the corresponding eigenvectors of $A_f$ and $B_f=\left( b_f^{1},\;\ldots,\; b_f^{m_f}\right)$. Then the dynamics \eqref{eq:followerclose} is leader-reachable if $\lambda_i\ne\lambda_j$ and $\nu^T_ib_f^j\ne0$ $\forall i,j$.  
\end{theorem}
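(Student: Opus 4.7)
The plan is to reduce leader-reachability to a rank condition via Lemma \ref{lem:reachmain} and then exploit the symmetry of $A_f$ to diagonalize and reveal a Vandermonde structure inside the reachability matrix.

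First I would invoke Lemma \ref{lem:reachmain}: it suffices to prove that $\mathcal{R}_l = (B_f,\, A_f B_f,\, \ldots,\, A_f^{n_f-1}B_f)$ has full row rank $n_f$. Since $A_f$ is symmetric, it admits an orthogonal eigendecomposition $A_f = V \Lambda V^T$ with $V = (\nu_1,\ldots,\nu_{n_f})$ and $\Lambda = \diag(\lambda_1,\ldots,\lambda_{n_f})$, where the $\nu_i$ are orthonormal and, by hypothesis, the $\lambda_i$ are pairwise distinct. Multiplying $\mathcal{R}_l$ on the left by the invertible matrix $V^T$ gives the equivalent rank problem for
\begin{equation*}
V^T \mathcal{R}_l = \bigl(M,\; \Lambda M,\; \Lambda^2 M,\; \ldots,\; \Lambda^{n_f-1} M\bigr),
\end{equation*}
where $M := V^T B_f$ has $(i,j)$-entry $\nu_i^T b_f^j$, which is nonzero for every $(i,j)$ by assumption.

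Next I would exhibit an invertible $n_f \times n_f$ submatrix. Selecting only the first column $m^{(1)}$ of each block $\Lambda^k M$ yields the square matrix $S = (m^{(1)},\, \Lambda m^{(1)},\, \ldots,\, \Lambda^{n_f-1} m^{(1)})$. Because $\Lambda$ is diagonal, this factors cleanly as
\begin{equation*}
S = \diag(m^{(1)}_1,\ldots,m^{(1)}_{n_f})\, W, \qquad W_{ij} = \lambda_i^{\,j-1}.
\end{equation*}
The diagonal factor is invertible since every $m^{(1)}_i = \nu_i^T b_f^1$ is nonzero, and $W$ is a Vandermonde matrix in the distinct nodes $\lambda_1,\ldots,\lambda_{n_f}$, hence invertible. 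Thus $S$ is invertible, $V^T\mathcal{R}_l$ has full row rank $n_f$, and therefore so does $\mathcal{R}_l$, completing the proof.

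There is no substantive obstacle here: the argument is essentially a clean PBH/Vandermonde calculation made possible by the two genericity assumptions. The only mild subtlety to flag is that the hypothesis $\nu_i^T b_f^j \neq 0$ for all $i,j$ is in fact stronger than necessary; a single nonzero entry per row of $M$ would already suffice via the PBH test (for each simple eigenvalue $\lambda_i$, the left null space of $A_f - \lambda_i I$ is $\text{span}(\nu_i)$, and $\nu_i^T B_f \neq 0$ is exactly what PBH demands). I would use the stronger Vandermonde construction above since it matches the theorem as stated and makes the role of distinct eigenvalues manifest.
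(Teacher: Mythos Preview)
Your proof is correct and follows essentially the same route as the paper: diagonalize $A_f=Q\Lambda Q^\top$, factor $\mathcal{R}_l=Q\,(Q^\top B_f,\Lambda Q^\top B_f,\ldots,\Lambda^{n_f-1}Q^\top B_f)$, and then argue full row rank from the structure of the second factor. If anything, your version is more explicit than the paper's, which stops after displaying $\overline{\mathcal{R}}_l$ and asserts that the result ``immediately follows''; your Vandermonde factorization (and the PBH remark) fills in exactly what the paper leaves to the reader.
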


\begin{proof} 
	In this case, the matrix $A_f$ can be written as $Q\Lambda Q^{\top}$ where $Q$ is an orthonormal matrix comprised of $\nu_i$  and $\Lambda$ is a diagonal matrix  containing  eigenvalues of $A_f$. It is easy to see that 
	\[
	\begin{split}
	\mathcal{R}_l&=\left(B_f,
	\;Q\Lambda Q^{\top} B_f,\ldots,Q\Lambda^{n_f-1} Q^{\top}B_f\right)\\
	&= Q \underbrace{\left(Q^{\top}B_f,\; \Lambda Q^{\top} B_f,\ldots, \Lambda^{n_f-1} Q^{\top}B_f\right).}_{\overline{\mathcal R}_l}\
	\end{split}
	\]
	The matrix $Q$ has full rank. Thus, the rank of $\mathcal R_l$ is  determined by $\overline{\mathcal{R}}_l$  that is expressed as 
	
	\[
	\begin{split}
	&\left(
	\begin{pmatrix}
	\nu_1^\top \\ \vdots \\ \nu_{n_f}^\top 
	\end{pmatrix} \begin{pmatrix}
	b_f^1&\ldots&b_f^{m_f}
	\end{pmatrix},\; \begin{pmatrix}
	\lambda_1& &\\
	&\ddots&\\
	& &\lambda_{n_f}
	\end{pmatrix} \begin{pmatrix}
	\nu_1^\top\\ \vdots\\  \nu_{n_f}^\top 
	\end{pmatrix} \right.\\  	&\left.\begin{pmatrix}
	b_f^1&\ldots&b_f^{m_f}
	\end{pmatrix}, \; \ldots,
	\begin{pmatrix}
	\lambda_1& &\\
	&\ddots&\\
	& &\lambda_{n_f}
	\end{pmatrix} ^{n_f-1} \begin{pmatrix}
	\nu_1^\top\\ \vdots\\  \nu_{n_f}^\top 
	\end{pmatrix}\right.\\ &\left. \begin{pmatrix}
	b_f^1&\ldots&b_f^{m_f}
	\end{pmatrix} 
	\right.\Bigg).	\end{split}
	\]

	By appealing to the theorem assumptions and the fact that $\nu_i^\top\nu_j\ne0 $ $\forall i\ne j$, the result immediately follows.  
\end{proof}
{\crr
\subsubsection {Circulant $A_f$}
In this subsection,  we study the situation where the matrix $A_f$ has circulant structure.  This situation may happen  naturally when the interconnection topology is a circulant graph see e.g \cite{nabi2013controllability}.  It is worthwhile noting that networked systems with circulant topology arise in different applications such as quantum communication \cite{Ilic2011} and complex memory management \cite{Wong1974}.

The following example illustrates a situation where the matrix $A_f$ acquires a circulant structure.

}
\begin{example}
	Let us consider a network consisting of four identical  single-output-single-output (SISO) subsystems. We suppose the  dynamics for each subsystem  are expressed as  {\crr
	\[
	\begin{split}
	\label{sysexample}
	\hat x^i_{t+1} &=a \hat x^i_{t}+b \hat v^i_{t},\\  
	\hat w^i_{t}   &=\hat x^i_{t},\; i=1,\dots,4.
	\end{split}
	\]
	with $|a|<1$. $\hat  v^i_{t}=\sum_{j=1}^4 \hat  L_{ij}\hat x^j_{t}$, $i=1,2,3$ and $\hat v^4_{t}=u_t$.  
	
	As shown in Fig. \ref{fig:agentstructure}, the interconnection parameters i.e. $\hat L_{ij}$ are set as $\hat L_{12}=\hat L_{23}=\alpha_1$, $\hat L_{13}=\hat L_{21}= \hat L_{32}=\alpha_2$, $\hat L_{32}=\alpha_3$, $\hat L_{14}=b$ and $\hat L_{24}=\hat L_{34}=0 $. }  Then it is easy to verify that $A_f=\begin{pmatrix}a&\alpha_1&\alpha_2\\\alpha_2&a&\alpha_1\\ \alpha_3&\alpha_2&a\end{pmatrix}$ and $B_f=\begin{pmatrix}
	b\\0\\0
	\end{pmatrix}$.  {\crr We set parameter of  dynamics and topology  to be  $a=0.2$, $b=\alpha_1=\alpha_3=1$  and $\alpha_2=0.5$. Then it can be checked that  the whole network depicted in Fig. \ref{fig:agentstructure} is  reachable.}

	\begin{figure}[!t]
		\begin{center}
			\includegraphics[width=5cm,height=5cm]{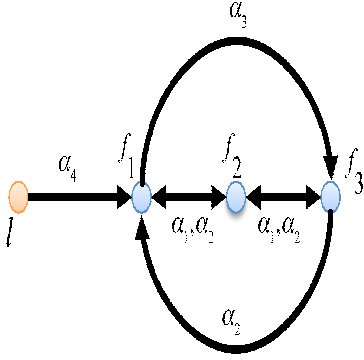}
			\caption{\emph{Followers are colored in blue and denoted by $f_i$ and the sole leader is yellow indicated by $l$. The weighting coefficients on connecting links are represented by $\alpha_i$.  }} \label{fig:agentstructure}
		\end{center}
	\end{figure}

\end{example}

As mentioned earlier, the matrix $A_f$ in the above example has a particular form known as \textit {circulant}. Thus,  we now investigate  in more detail a scenario where  the matrix $A_f$   has circulant structure  i.e. is of the form 
\begin{equation*}
\begin{split}
A_f &=\rm{Circ} (\alpha_0,...,\alpha_{n_f-1}) \\&=\begin{pmatrix}
\alpha_0 & \alpha_1 & \cdots &\alpha_{n_f-2}& \alpha_{n_f-1} \\
\alpha_{n_f-1} & \alpha_0 &\alpha_1& \cdots & \alpha_{n_f-2} \\
\vdots&\ddots&\ddots& \ddots&\vdots\\
\alpha_2 & \cdots & \alpha_{n_f-1} & \alpha_0& \alpha_{1} \\
\alpha_1 & \alpha_2&\cdots & \alpha_{n_f-1}& \alpha_0
\end{pmatrix}.
\end{split}
\end{equation*}

It is well-known that circulant matrices \cite{davis} are  diagonalizable by the \textbf{Fourier matrix}
\begin{equation*}
\begin{split}
\Phi&=\frac{1}{\sqrt{n_f}}\begin{pmatrix} 1&1&1& \ldots & 1\\
1& \omega &\omega^2 &\ldots & \omega^{n_f-1} \\
1 & \omega^2 & \omega^4 & \ldots & \omega^{2n_f-2}\\
\vdots & & & & \\
1 & \omega^{n_f-1} & \omega^{2n_f-2} & \ldots & \omega^{(n_f-1)^2}
\end{pmatrix} \text{ ,}\\
&=\frac{1}{\sqrt{n_f}} \begin{pmatrix}
\phi_1\\ \phi_2\\ \vdots \\ \phi_{n_f} 
\end{pmatrix}
\end{split}
\end{equation*}
where $\omega=e^{2 \pi j /{n_f}}$  denotes a primitive $n_f-$th root of
unit and $\phi_i$ denote rows of $\Phi$. Note, that $\Phi$ is both a unitary and a symmetric
matrix.
It is then easily seen that any circulant matrix
$L$ has the form
$
A_f=\Phi \diag(p_L(1),p_L(\omega),\ldots,p_L(\omega^{n_f-1}))\Phi^*,
$ $= \Phi \Gamma \Phi^*$
where
$
p_L(z):=\sum_{k=0}^{n_f-1}c_kz^{k-1}.
$
As a consequence of the preceding analysis we obtain the following result.

\begin{theorem} \label{them:circulantA}
	Suppose that the matrix $A_f$ is   circulant and $ B_f=\left( b_f^{1},\;\ldots,\; b_f^{m_f}\right)$. Then the dynamics \eqref{eq:followerclose} is leader-reachable if $ \phi^\top_ib_f^j\ne0$,  $\forall i,j$.  
\end{theorem}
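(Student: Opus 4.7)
The plan is to mirror the argument given for the symmetric case (Theorem on symmetric $A_f$), replacing the orthogonal diagonalization $Q \Lambda Q^\top$ by the Fourier diagonalization $A_f = \Phi \Gamma \Phi^*$ that was just established in the preamble to the statement. The role of orthogonality of $Q$ is played here by unitarity of $\Phi$, so every step transports cleanly to the circulant setting.

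First, I would use $A_f^k = \Phi \Gamma^k \Phi^*$ to factor
\[
\mathcal{R}_l = \Phi\,\bigl(\Phi^* B_f,\ \Gamma \Phi^* B_f,\ \ldots,\ \Gamma^{n_f-1} \Phi^* B_f\bigr) =: \Phi\,\overline{\mathcal{R}}_l.
\]
Since $\Phi$ is unitary, $\rank \mathcal{R}_l = \rank \overline{\mathcal{R}}_l$, so it suffices to establish full row rank of $\overline{\mathcal{R}}_l$. Next, using symmetry of $\Phi$, which gives $\Phi^* = \bar{\Phi}$, the $(i,j)$ entry of $\Phi^* B_f$ equals $\overline{\phi_i}^\top b_f^j = \overline{\phi_i^\top b_f^j}$, because $b_f^j$ is real. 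The hypothesis $\phi_i^\top b_f^j \neq 0$ therefore ensures that every entry of $\Phi^* B_f$ is nonzero. Finally, I would extract the $n_f \times n_f$ submatrix of $\overline{\mathcal{R}}_l$ built from the first column of each block $\Gamma^k \Phi^* B_f$; its $(i,k+1)$ entry is $\lambda_i^{\,k}\,(\Phi^* B_f)_{i1}$, where $\lambda_i = p_L(\omega^{i-1})$. This submatrix factors as $\diag\bigl((\Phi^* B_f)_{i1}\bigr)\cdot V$, with $V$ the Vandermonde matrix in the $\lambda_i$. The diagonal factor is invertible by the hypothesis, so full row rank reduces to invertibility of $V$.

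The main obstacle is precisely the invertibility of this Vandermonde block, which demands that the eigenvalues $p_L(1), p_L(\omega), \ldots, p_L(\omega^{n_f-1})$ be pairwise distinct. This is the exact circulant analogue of the $\lambda_i \neq \lambda_j$ hypothesis appearing in the symmetric theorem, and it is indispensable (consider $A_f = I$: all eigenvalues coincide and no choice of $B_f$ with $m_f < n_f$ can yield leader-reachability, even though $\phi_i^\top b_f^j$ can be arranged to be nonzero for every $i,j$). I would therefore either add distinctness of the $p_L(\omega^k)$ as an explicit hypothesis, or invoke the fact that for generic circulant coefficients $\alpha_0, \ldots, \alpha_{n_f-1}$ the values $p_L(\omega^k)$ are automatically distinct, so that the assertion holds in the generic regime envisaged by the paper.
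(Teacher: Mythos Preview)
Your approach is exactly the paper's: factor $\mathcal{R}_l = \Phi\,\underline{\mathcal{R}}_l$ via the Fourier diagonalization and then invoke the same Vandermonde-type argument used for the symmetric case. The paper's proof literally writes this factorization and then says ``by using the same argument as in the proof of Theorem~\ref{them:symmetricA} the result immediately follows,'' so you have reproduced and fleshed out precisely what the authors intended.

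Your observation about the missing distinct-eigenvalue hypothesis is well taken and is not addressed in the paper: Theorem~\ref{them:symmetricA} explicitly assumes $\lambda_i \neq \lambda_j$, whereas the circulant statement omits any analogue for the $p_L(\omega^k)$, yet the paper's proof simply defers to the symmetric argument, which needs that hypothesis. Your counterexample $A_f = I$ shows the omission is genuine, so the fix you propose (adding pairwise distinctness of the $p_L(\omega^k)$, or restricting to generic circulant data) is the right repair.
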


\begin{proof}
	From the above analysis, one can write 
	\[
	\begin{split}
	\mathcal{R}_l&=\left(B_f,
	\;\Phi\Gamma \Phi^* B_f,\ldots,\Phi\Gamma^{n_f-1} \Phi^*B_f\right)\\
	&= \Phi \underbrace{\left(\Phi^*B_f,\; \Gamma \Phi^* B_f,\ldots, \Gamma^{n_f-1} \Phi^*B_f\right)}_{\underline{\mathcal R}_l}\
	\end{split}
	\]
	Now by using the same argument as in the proof of Theorem  \ref{them:symmetricA} 
	the result immediately follows. 
\end{proof}

\subsection{Generic Reachability}
{\crb  The previous subsection examined the leader-reachability and base-reachability notions for special network structures.	In this subsection,  we show that these properties hold in almost all cases.  To this end, we first need to define  the parameter space $\Theta$ as 
	
	\begin{equation}
	\begin{split}
	\Theta=&  \{ \mvec \left (A_1,\ldots, A_N\right), \mvec \left(B_1,\ldots,B_N\right),\\ 
	&\mvec \left(C_1,\ldots,C_N\right), \mvec(L)\}.
	\end{split}
	\end{equation}
	Then we recall the notion of generic set from \cite{anderson2015}. A subset of the parameter space $\Theta$ is said to be generic if it is
	an open and dense in $\Theta$. 
} We now use this notion to introduce the next results.

\begin{theorem}
	The systems \eqref{eq:followerclose} and \eqref{eq:leaderagg} are leader-reachable and base-reachable  on a generic subset of the parameter space $\Theta$. 
\end{theorem}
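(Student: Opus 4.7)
The plan is to split the claim along its two halves. Base-reachability is almost free: by the standing assumption each leader pair $(A_i,B_i)$ with $i>N_f$ is reachable, and since $A_l=\diag(A_{N_f+1},\ldots,A_N)$ and $B_l=\diag(B_{N_f+1},\ldots,B_N)$ are block-diagonal, the columns of each $A_l^kB_l=\diag(A_i^kB_i)$ are supported in a single diagonal block. Collecting the contributions of all blocks shows that $\mathcal R_b$ has full row rank for \emph{every} admissible $\theta\in\Theta$, so base-reachability holds unconditionally, and in particular on any generic subset.

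For leader-reachability I would use a Zariski-style polynomial genericity argument. The entries of $A_f=\overline A_f+\overline B_fL_{ff}C_f$ and $B_f=\overline B_fL_{lf}C_l$ are polynomial in the coordinates of $\Theta$, hence so is every entry of $\mathcal R_l$ and, in turn, every $n_f\times n_f$ minor $m_k(\theta)$ of $\mathcal R_l$. By Lemma~\ref{lem:reachmain} leader-reachability fails precisely on
\[
V:=\bigcap_k\{\theta\in\Theta:\,m_k(\theta)=0\},
\]
a closed algebraic subset of $\Theta$. Its complement is automatically open; showing density therefore reduces to exhibiting a single $\theta^\ast\in\Theta$ at which at least one $m_k$ does not vanish.

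To produce such a $\theta^\ast$ I would set $L_{ff}=0$, so that $A_f=\overline A_f=\diag(A_1,\ldots,A_{N_f})$. Pick each $A_i$ symmetric with $n_i$ distinct eigenvalues, made disjoint across subsystems, so that $A_f$ is itself symmetric with $n_f$ distinct eigenvalues; the standing reachability of each $(A_i,B_i)$ can be arranged concurrently by taking each $B_i$ with no zero entry in the eigenbasis of $A_i$. Finally choose the coupling block $L_{lf}$ so that every eigenvector $\nu_i$ of $\overline A_f$ has non-zero inner product with every column $b_f^j$ of $B_f=\overline B_fL_{lf}C_l$. Because the eigenvectors of $\overline A_f$ are block-supported, these non-degeneracy conditions amount to the non-vanishing of finitely many linear forms in the entries of $L_{lf}$, trivially met for an open dense choice. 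Theorem~\ref{them:symmetricA} then delivers $\rank\mathcal R_l(\theta^\ast)=n_f$; the minor witnessing this is a polynomial on $\Theta$ that is non-zero at $\theta^\ast$, hence not identically zero. Consequently $V$ is a proper closed algebraic subvariety of $\Theta$ and $\Theta\setminus V$ is open and dense.

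The main obstacle is the bookkeeping in the construction of $\theta^\ast$: verifying that the standing observability and reachability assumptions on each $(A_i,B_i,C_i)$ are jointly compatible with the spectral separation of $\overline A_f$ and the linear non-degeneracy condition on $L_{lf}$. This is a routine though slightly fiddly exercise; once any such point is produced, the polynomial argument above runs mechanically and yields the claimed generic open-and-dense conclusion.
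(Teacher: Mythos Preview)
Your argument is correct and follows the same skeleton as the paper's proof: reduce leader-reachability to the non-vanishing of a polynomial (some $n_f\times n_f$ minor of $\mathcal R_l$) on $\Theta$, then exhibit a single parameter point where it does not vanish. The paper simply asserts that such a point exists; you go further and construct one explicitly by setting $L_{ff}=0$ and invoking Theorem~\ref{them:symmetricA}, which is a nice concrete instantiation. For base-reachability the paper defers to Wonham, whereas you give the direct block-diagonal argument (which the paper itself sketches informally just before this subsection). One small wrinkle: your base-reachability claim relies on the standing assumption that each $(A_i,B_i)$ is reachable, but $\Theta$ as defined does not enforce this; over the unconstrained $\Theta$ base-reachability is generic rather than universal, though this follows immediately from the same polynomial argument applied to $\mathcal R_b$.
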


\begin{proof}
	First,  one can easily find a set of matrices $A_i$,$B_i$, etc., such that  the associated matrix $\mathcal R_l$ attains full- row rank.  Second, let $\sigma_i$  $i=1,\ldots,n_fm_f$ denote the  columns of $\mathcal R_l$ defined in Lemma \ref{lem:reachmain}. Then note that  the system \eqref{eq:followerclose} is not reachable  if and only if 
	
	\begin{equation} \label{eq:zeros}
	\det\{\Gamma\}=0,
	\end{equation}
	where $\Gamma \in \mathbb R^{n_f \times n_f}$ and the columns of $\Gamma$ are constructed by selecting  any $n_f$ choice of $\sigma_i$. Then the set of zeros of  \eqref{eq:zeros} defines a proper algebraic set. Therefore, its complement, which is associated with all reachable systems, is the complement of a proper algebraic set and hence is open and dense in the parameter space. The latter is equivalent to the statement of the theorem. Finally, note that those parts of the theorem statement asssociated with the system \eqref{eq:leaderagg} become  trivial in the light of  \cite{wonhem1979} pages 44-45.  
\end{proof}

The preceding  result roughly suggests that for almost all choices of parameter matrices $A_i$, $B_i$ and etc.,  there exists a $u_t$ that can steer the follower and leader states to  desired values.

\section{Conclusion and Future Works}\label{sec:con}
We examined the reachability problem for   networked systems.  It was assumed that all subsystems are expressed by discrete linear time-invariant state space models.

We considered the network topology to be time-invariant.  We addressed a hierarchical framework where there exists  a base-station at the highest level; superior subsystems (leaders) are at an  intermediate level  and the rest of subsystems (followers) stay at the final stage.   The followers are only able to communicate with each other   and with leaders only.  We introduced notions of   leader-reachability and base-reachability.  We explored situations under which the algebraic criteria associated with these notions are satisfied.  It turned out that the reachability of  leaders is  enough for   fulfilling base-reachability. We then studied  leader-reachability and provided algebraic conditions for this property to hold. We examined different topologies such as those that give  rise to symmetric and circulant state matrices. We further demonstrated that when the system parameters assume generic values, the whole network is reachable.

There are several interesting problems that still remain open. The scenarios discussed in this paper only cover    certain classes of linear networked systems. It would be  of interest to provide a result that  includes   more general instances.     Another problem involves   studying  reachability for a scenario where   interconnection matrices  assume only  zero and free entries.  This problem is highly related with the structural controllability problem studied in the literature \cite{lin1974structural}. Another interesting issue is associated with control energy of the whole networked system. In particular, we are interested in  designing  topologies such that reachability is preserved but the deployed control energy remains within  some given boundaries as well.

\section*{Acknowledgments}
The support by the Australian Research Council (ARC) is gratefully acknowledged.

\bibliographystyle{elsarticle-num}
\bibliography{thesis}

\end{document}